%%%%%%%%%%%%%%%%%%%% author.tex %%%%%%%%%%%%%%%%%%%%%%%%%%%%%%%%%%%
%
% sample root file for your "contribution" to a proceedings volume
%
% Use this file as a template for your own input.
%
%%%%%%%%%%%%%%%% Springer %%%%%%%%%%%%%%%%%%%%%%%%%%%%%%%%%%

\documentclass{svproc}      
\usepackage{amssymb}
\usepackage{amsmath}
\usepackage{graphicx}       
\usepackage{xcolor}
\usepackage{booktabs}
\usepackage{multirow}

\newenvironment{sistem}
{\left\lbrace\begin{array}{@{}l@{}}}
{\end{array}\right.}

%
% RECOMMENDED %%%%%%%%%%%%%%%%%%%%%%%%%%%%%%%%%%%%%%%%%%%%%%%%%%%
%

% to typeset URLs, URIs, and DOIs
\usepackage{url}

\begin{document}                                         
\mainmatter                                              % start of a contribution

\title{Action potential dynamics on heterogenous neural networks: from kinetic to macroscopic equations}
\titlerunning{Kinetic models for action potential} 

\author{Marzia Bisi\inst{1} \and Martina Conte\inst{1} \and Maria Groppi$^*$\inst{1}}

\authorrunning{Marzia Bisi et al.}  

\tocauthor{Marzia Bisi, Martina Conte, Maria Groppi}         

\institute{$^1$Department of Mathematical, Physical, and Computer Science, University of Parma, Parco Area delle Scienze 53/A - 43124 Parma, Italy\\ \email{$^*$ Corresponding author: maria.groppi@unipr.it}}

\maketitle                                               % typeset the title of the contribution

\begin{abstract}     
In the context of multi-agent systems of binary interacting particles, a kinetic model for action potential dynamics on a neural network is proposed, accounting for heterogeneity in the neuron-to-neuron connections, as well as in the brain structure. Two levels of description are coupled: in a single area, pairwise neuron interactions for the exchange of membrane potential are statistically described; among different areas, a graph description of the brain network topology is included. Equilibria of the kinetic and macroscopic settings are determined and numerical simulations of the system dynamics are performed with the aim of studying the influence of the network heterogeneities on the membrane potential propagation and synchronization.                                 
\keywords{multi-agent systems, kinetic theory, action potential, heterogenous network.}
\end{abstract}                                           

\section{Introduction}\label{sec:intro}  
In recent years, the concepts of collective behaviors and network dynamics have gained a lot of attention in the context of complex systems. In particular, these have emerged as central features in the description of neuronal processes in the brain, both at the microscopic and macroscopic levels. It is well-recognized, in fact, that emerging collective behaviors, like neuron synchronization, are strongly influenced by the microscopic interactions of the individual agents involved in the process. 

Methodological approaches coming from the kinetic theory have shown great potential to tackle systems that, at the microscopic scale, may be described as large ensembles of interacting particles. It provides, in fact, a flexible framework integrating the stochastic particle level, where elementary cell dynamics take place, with the macroscopic aggregate level. Without pretending to be exhaustive, we mention that, in the recent literature, examples of the use of kinetic theory methods for biophysical applications are found e.g. in the modeling of the onset and propagation of Alzheimer's disease \cite{bertsch2017JPA,bertsch2017alzheimer} as well as of tumor growth, autoimmune diseases, and therapeutic control strategies \cite{preziosi2021JTB,conte2018DCDSB,conte2023M3AS,DellaMarca22}. Along with kinetic theory methods, network models play a key role in the context of interacting multi-agent systems, since they allow to describe particle interactions in relation to the background graph structure. Examples of kinetic descriptions of networked interactions can be found, for instance, in opinion formation problems \cite{burger2021VJM,toscani2018PRE} or in the modeling of the spread of infectious diseases \cite{loy2021MBE}. 

In this paper, we adopt the methodological framework proposed in \cite{conte2023kinetic} to describe action potential dynamics in heterogenous neural networks. We use the kinetic word {\it particle} to indicate the fundamental microscopic entity whose interactions shape the dynamics of the entire system, i.e., a brain cell. We consider the brain network structure introduced in \cite{sporns2005human}, assuming that  particles  populate the vertices of a relatively small graph that describes the topology of the connections among different particle communities (brain macro-areas). Within each node, due to the large number of particles and connections, the interactions are statistically described, starting from the modeling of the neuron-to-neuron exchanges of electrical stimuli. This combined approach allows to account for different types of heterogeneities, in both the graph structure and the neuron-to-neuron interactions. We derive the systems of coupled differential equations that characterize the network dynamics relevant to two choices of the neuron connection distribution function. Then, we investigate the complete synchronization of the resulting neural networks, as well as possible emerging asynchronous states. 

The paper is organized as follows. In Section \ref{Sec_Prelim}, we first recall the main elements of the network model proposed in \cite{conte2023kinetic} and, then, we specify the microscopic interaction rule and the neurons connection distribution and we derive the macroscopic systems under investigation. In Section \ref{Sec_Eq}, we analyze the equilibria of both the kinetic and the macroscopic settings. Section \ref{simulation} collects three sets of numerical tests aimed at showing the influence of the heterogenous network on the propagation and synchronization of the action potential. Finally, in Section \ref{conclusion} we draw some conclusions and illustrate possible outcomes of this work.

\section{Preliminaries}\label{Sec_Prelim}
Starting from the general setting proposed in \cite{conte2023kinetic}, in this study we focus on the propagation of the membrane potential generated by neurons and we analyze the impact of different heterogenous aspects, which may characterize the network, on its dynamics. The modeling framework is based on a combination of a statistical description of the neuron interactions within each brain macro-area \cite{bullmore2009complex,sporns2005human} with a classical description, based on graph theory, for the connection between the different areas. In the following, we first recall the main aspects of the considered modeling framework, and then we describe the interaction rule that describes the microscopic interactions.

\subsection{Model description}\label{Model_derivation}
Let $t\ge0$ be the time variable, to describe the membrane potential evolution we define $\mathcal{V}_t:=\mathcal{V}(t)$ the dimensionless {\it membrane potential} of a single neuron \cite{izhikevich2007dynamical}, which we assume to evolve according to a neuron-to-neuron transmission mechanism, and $\mathcal{W}_t:=\mathcal{W}(t)$ the associated {\it recovery variable}. The membrane potential is defined as the difference in electric potential between the interior and the exterior of a biological cell. Its typical dynamic is the so-called {\it action potential}, defined as a sudden, fast, transitory, and propagating change of the resting membrane potential. The recovery variable is introduced to act as negative feedback on the membrane potential. Then, we define $\mathcal{C}$ the {\it degree of connections} of a neuron and $\mathcal{X}$ its {\it spatial location} within the brain network. We assume that $\mathcal{C}$ and $\mathcal{X}$ do not depend on time, as the evolution of the action potential is much faster than the temporal window in which cerebral tissue changes may happen. Thus, the microscopic state of a single neuron is characterized by the quartet $(\mathcal{X},\mathcal{V},\mathcal{W},\mathcal{C})$ in the state space $\mathbb{I}\times[0, 1]\times[0,1]\times[0,\mathcal{C}_{\text{max}}]$, with ${\mathbb{I}:=\{1,...,N\}}$, $N$ being the number of brain macro-areas, and $\mathcal{C}_{\text{max}}$  a huge, but finite, upper bound for the number of connections of a single neuron.  As such, the microscopic variables $\mathcal{V}$ and $\mathcal{W}$ have to be regarded as independent of $\mathcal{C}$ and $\mathcal{X}$, in the sense that their evolution is not a function of $\mathcal{C}$ or $\mathcal{X}$ and vice versa.

Let us consider a generic pair of neurons with microscopic states $(\mathcal{X},\mathcal{V}_t,\mathcal{W}_t,\mathcal{C})$, ${(\mathcal{X}^*,\mathcal{V}^*_t,\mathcal{W}^*_t,\mathcal{C}^*) \in \mathbb{I}\times [0,1]\times[0,1]\times [0,\mathcal{C}_{\text{max}}]}$. Introducing a Bernoulli random variable $T_p\in\{0,1\}$ of parameter $p$ that describes whether an interaction leading to a change of microscopic state takes place ($\mathbb{T}_p=1$) or not ($\mathbb{T}_p=0$), we assume that in a short time interval $\Delta t>0$ there is a probability proportional to $\Delta t\,B(\mathcal{X},\mathcal{X}^*) G(\mathcal{C},\mathcal{C}^*)$ that a neuron undergoes the transmission mechanism. The collision kernel $B(\mathcal{X},\mathcal{X}^*)G(\mathcal{C},\mathcal{C}^*)$ takes into account the probability that a neuron with microscopic states $(\mathcal{X},\mathcal{V},\mathcal{W},\mathcal{C})$ is connected by a synapsis with a neuron with microscopic state $(\mathcal{X}^*,\mathcal{V}^*,\mathcal{W}^*,\mathcal{C}^*)$. Even though, in general, the collision kernel can be given as a unique function of $(\mathcal{X},\mathcal{C})$, here we assume that it can be factorized in such a way that $B(\mathcal{X},\mathcal{X}^*)$ describes the influence of the neuron location on the interactions, while $G(\mathcal{C},\mathcal{C}^*)$ refers to their degree of connections. The specific definition of $B(\mathcal{X},\mathcal{X}^*)$ is related to the brain network description and is provided in the following section. Concerning the choice of the function $G$, representing the probability of a neuron connection in terms of the number of connections $\mathcal{C}$ and $\mathcal{C}^*$ of the interacting neurons, we analyze two distinct cases:
\[
{G(\mathcal{C},\mathcal{C}^*)=1} \qquad \text{or} \qquad {G(\mathcal{C},\mathcal{C}^*)=\mathcal{C}^*}\,.
\]
The former indicates that all neurons within a single area are connected to each other, while the latter assumes that the greater the number of connections of the postsynaptic neuron, the greater the probability of undergoing the transmission. Such choice for $G(\mathcal{C},\mathcal{C}^*)$ is particularly relevant in the context of disease propagation, like Alzheimer. In fact, in this case, the disease propagates from ill to healthy neurons and the healthier a neuron (thus, the higher the number of its connections) is, the higher the probability of interacting with ill neurons is.  The microscopic law for the pair $(\mathcal{X},\mathcal{V}_t,\mathcal{W}_t,\mathcal{C})$, $(\mathcal{X}^*,\mathcal{V}^*_t,\mathcal{W}^*_t,\mathcal{C}^*)$ can be written as
\begin{equation}
\begin{sistem}
(\mathcal{V}_{t+\Delta t},\mathcal{W}_{t+\Delta t})=(1-\mathbb{T}_p)(\mathcal{V}_t,\mathcal{W}_t)+\mathbb{T}_p(\mathcal{V}'_t,\mathcal{W}'_t)\\[0.2cm]
(\mathcal{V}^*_{t+\Delta t},\mathcal{W}^*_{t+\Delta t})=(1-\mathbb{T}_p)(\mathcal{V}^{*}_t,\mathcal{W}^{*}_t)+\mathbb{T}_p(\mathcal{V}^{*'}_t,\mathcal{W}^{*'}_t)\,.\\
\end{sistem}
\end{equation}
where $(\mathcal{V}'_t,\mathcal{W}'_t):=(\mathcal{V}_t,\mathcal{W}_t)+L(\mathcal{X},\mathcal{X}^*,\mathcal{V}_t,\mathcal{W}_t,\mathcal{V}^*_t,\mathcal{W}^*_t)$. Here, the function
$$L(\mathcal{X},\mathcal{X}^*,\mathcal{V}_t,\mathcal{W}_t,\mathcal{V}^*_t,\mathcal{W}^*_t):\mathbb{I}\times\mathbb{I}\times[0, 1]^4\to[0, 1]^2$$ 
describes the binary interactions rule. 

To derive the Boltzmann-type equations for the evolution of $\mathcal{V}_t$ and $\mathcal{W}_t$ we start from the relation
\[
\phi(\mathcal{X},\mathcal{V}_{t+\Delta t},\mathcal{W}_{t+\Delta t},\mathcal{C})=\phi(\mathcal{X}, (\mathcal{V}_t,\mathcal{W}_t)+\mathbb{T}_pL(\mathcal{X},\mathcal{X}^*,\mathcal{V}_t,\mathcal{W}_t,\mathcal{V}^*_t,\mathcal{W}^*_t),\mathcal{C})
\]
where $\phi:\mathbb{I}\times [0,1]\times[0,1]\times[0,\mathcal{C}_{\text{max}}]\to \mathbb{R}$ is a test function representing any observable quantity that can be computed out of the microscopic state. Taking the average of both sides and computing the mean with respect to $\mathbb{T}_p$ (denoted with angular brackets), we get
\begin{equation}\label{ave_eq}
\begin{split}
&\langle\phi(\mathcal{X},\mathcal{V}_{t+\Delta t},\mathcal{W}_{t+\Delta t},\mathcal{C})\rangle=\langle\phi(\mathcal{X},\mathcal{V}_{t},\mathcal{W}_{t},\mathcal{C})\rangle\\[0.2cm]
&+\Delta t \, \langle B(\mathcal{X},\mathcal{X}^*)G(\mathcal{C},\mathcal{C}^*)\phi(\mathcal{X},(\mathcal{V}_t,\mathcal{W}_t)+L(\mathcal{X},\mathcal{X}^*,\mathcal{V}_t,\mathcal{W}_t,\mathcal{V}^*_t,\mathcal{W}^*_t),\mathcal{C})\rangle\\[0.2cm]
&-\Delta t \,\langle B(\mathcal{X},\mathcal{X}^*)G(\mathcal{C},\mathcal{C}^*)\phi(\mathcal{X},\mathcal{V}_{t},\mathcal{W}_{t},\mathcal{C})\rangle
\end{split}
\end{equation}
In the limit $\Delta t \to 0^+$, equation \eqref{ave_eq} reads
\begin{equation}\label{limit_eq}
\dfrac{d}{dt}\langle\phi(\mathcal{X},\mathcal{V}_{t},\mathcal{W}_{t},\mathcal{C})\rangle=\langle B(\mathcal{X},\mathcal{X}^*)G(\mathcal{C},\mathcal{C}^*)\left[\phi(\mathcal{X},\mathcal{V}_{t}',\mathcal{W}_{t}',\mathcal{C})- \phi(\mathcal{X},\mathcal{V}_{t},\mathcal{W}_{t},\mathcal{C})\right]\rangle\,.
\end{equation}

Let us identify each neuron by $(x,v,w,c)$, where $v$ is the membrane potential, $w$ the recovery variable, $c$ its degree of connections, and $x$ a discrete variable indicating the macro-area to which it belongs. Moreover, let us define $f(x,v,w,c,t):\mathbb{I}\times[0,1]\times[0,1]\times[0,c_{\text{max}}]\times\mathbb{R}_+\to\mathbb{R}$ the kinetic neuron distribution function such that
\begin{equation}\label{f_inhomo}
f=f(x,v,w,c,t):=\sum_{i=1}^Nf_i(v,w,c,t)\delta(x-i)\,,
\end{equation}
with $f_i(v,w,c,t)=f(\mathcal{X}=i,\mathcal{V}_t=v,\mathcal{W}_t=w,\mathcal{C}=c,t)$ being the kinetic distribution function in the $i$-th macro-areas. It holds
\[
\int\limits_0^1\int\limits_0^1f_i(v,w,c,t)dvdw=g_i(c)\quad \forall t>0\,\,\text{and} \,\,\forall i\in\mathbb{I}\,,
\]
where $g_i(c)$ is the fixed probability distribution of the neuronal connections in the $i$-th macro-area. Assuming that the pair $(\mathcal{X},\mathcal{V}_t,\mathcal{W}_t,\mathcal{C})$, $(\mathcal{X}^*,\mathcal{V}^*_t,\mathcal{W}^*_t,\mathcal{C}^*)$ consists in independent processes, i.e., their joint law reads ${f(x,v,w,c,t)f(x_*,v_*,w_*,c_*,t)}$, equation \eqref{limit_eq} can be re-written as 
\begin{equation}\label{limit_eq_inhomog}
{\allowdisplaybreaks 
\begin{split}
&\dfrac{d}{dt}\int\limits_{\mathbb{I}}\int\limits_0^1\int\limits_0^1\int\limits_0^{c_{\text{max}}}\phi(x,v,w,c)f(x,v,w,c,t)dcdwdvdx\\[0.15cm] 
&=\iint\limits_{\mathbb{I}}\iint\limits_0^1\iint\limits_0^1\iint\limits_0^{c_{\text{max}}}B(x,x_*)G(c,c_*)\left[\phi(x,v',w',c)- \phi(x,v,w,c)\right]\,\times\\[0.15cm]
&\hspace{3cm}f(x,v,w,c,t)f(x_*,v_*,w_*,c_*,t)dcdc_*dwdw_*dvdv_*dxdx_*\,.
\end{split}}
\end{equation}
Here, $(v',w'):=(v,w)+L(x,x_*,v,w,v_*,w_*)$ denotes the new state that a neuron with previous state $(x,v,w)$ acquires after an interaction with another neuron with state $(x_*,v_*,w_*)$, while $B(x,x_*)G(c,c_*)$ represents the collision kernel. Substituting \eqref{f_inhomo} into \eqref{limit_eq_inhomog}, fixing a certain index $i$ and assuming that $\phi(x,v,w,c)=\tilde{\phi}(x)\Phi(v,w,c)$, with $\tilde{\phi}(x)=1$ for $x=i$, while $\tilde{\phi}(x)=0$ for $x\ne i$, we obtain the weak form for the kinetic equation in the $i$-th macro-area
\begin{equation}\label{limit_eq_inhomog_i}
\begin{split}
&\dfrac{d}{dt}\int\limits_0^1\int\limits_0^1\int\limits_0^{c_{\text{max}}}\Phi(v,w,c)f_i(v,w,c,t)dcdwdv\\[0.15cm]
&=\sum_{j=1}^N\iint\limits_0^1\iint\limits_0^1\iint\limits_0^{c_{\text{max}}}B_{ij}G(c,c_*)[\Phi(v',w',c) -\Phi(v,w,c)]\,\times\\[0.15cm]
&\hspace{2.9cm}f_i(v,w,c,t)f_j(v_*,w_*,c_*,t)dcdc_*dwdw_*dvdv_*
\end{split}
\end{equation}
where $B_{ij}=B(x,x_*)$, with $x=i$ and $ x_*=j$, represents the probability of connection between the two macro-areas. Considering equation \eqref{limit_eq_inhomog_i}, we can analyze the evolution of the average membrane potential $V_i(t)$ and the average recovery variable $W_i(t)$ in the $i$-th macro-area, defined as
\begin{equation}
V_i(t):=\int\limits_0^{c_{\text{max}}}\!\int\limits_0^1\!\int\limits_0^1\!vf_i(v,w,c,t)dvdwdc\,,\,\, W_i(t):=\int\limits_0^{c_{\text{max}}}\!\int\limits_0^1\!\int\limits_0^1\!wf_i(v,w,c,t)dvdwdc\,,
\end{equation}
respectively. By choosing either $\Phi(v,w,c):=v$ or $\Phi(v,w,c):=w$ %$\forall {c\in[0,c_{\text{max}}]}$, ${\forall w\in[0,1]}$ 
we obtain
\begin{equation}
\begin{split}
&\dfrac{dV_i(t)}{dt}=\sum_{j=1}^N\iint\limits_0^{c_{\text{max}}}\iint\limits_0^1\iint\limits_0^1B_{ij}G(c,c_*)(v'-v)f_i(v,w,c,t)\,\times\\[0.1cm]
&\hspace{4cm}f_j(v_*,w_*,c_*,t)dvdv_*dwdw_*dcdc_*\,,
\end{split}
\label{V_gen_inhomo}
\end{equation}
%while for $\Phi(v,w,c):=w$ $\forall c\in[0,c_{\text{max}}]$, $\forall v\in[0,1]$ we get
\begin{equation}
\begin{split}
&\dfrac{dW_i(t)}{dt}=\sum_{j=1}^N\iint\limits_0^{c_{\text{max}}}\iint\limits_0^1\iint\limits_0^1B_{ij}G(c,c_*)(w'-w)f_i(v,w,c,t)\,\times\\[0.1cm]
&\hspace{4cm}f_j(v_*,w_*,c_*,t)dvdv_*dwdw_*dcdc_*.
\end{split}
\label{W_gen_inhomo}
\end{equation}

\subsection{Microscopic rule for action potential propagation}\label{microRule}
Following \cite{conte2023kinetic}, for the description of the microscopic rule for the propagation of the action potential, we consider combined versions of the nondimensional Morris--Lecar neuron model \cite{morris1981voltage} and the FitzHugh--Nagumo model \cite{fitzhugh1961impulses}. Precisely, for $x=i$,
\begin{equation}\label{micro_rule}
L(x,x_*,v,w,v_*,w_*):=(\textbf{i}_{ext}^i+\gamma^i(\bar{v}-v)+(v_*-v)-w,v-aw)
\end{equation}
with $\textbf{i}^i_{ext}$ the external current sources stimulating the neuron located in the $i$-th macro-area, $\bar{v}$ the relaxation value for the membrane potential, and $\gamma^i>0$ and $a>0$ dimensionless parameters. We assume that $\gamma^i$ and $\textbf{i}^i_{ext}$ may differ among the macro-areas. Moreover, the term $(v_*-v)$ takes into account at the microscopic level the coupling between two different neurons. Assuming that $G(c,c_*)=1$, $\forall i\in\mathbb{I}$ from equations \eqref{V_gen_inhomo} and \eqref{W_gen_inhomo}, we obtain
\begin{equation}\label{macro_sis_inhom}
\begin{sistem}
\dfrac{dV_i(t)}{dt}\!=\!(\textbf{i}^i_{ext}+\gamma^i(\bar{v}-V_i(t))-W_i(t))\sum\limits_{j=1}^NB_{ij}+\sum\limits_{j=1}^NB_{ij}(V_j(t)-V_i(t))\\[0.4cm]
\dfrac{dW_i(t)}{dt}=(V_i(t)-aW_i(t))\sum\limits_{j=1}^NB_{ij}\,.
\end{sistem}
\end{equation}
Instead, if we consider the other choice $G(c,c_*)=c_*$, for equations \eqref{V_gen_inhomo} and \eqref{W_gen_inhomo} to be a closed system we need to define the mixed first-order moments
\begin{equation}
K_i^v(t):=\!\!\!\int\limits_0^{c_{\text{}max}}\!\!\int\limits_0^1\!\int\limits_0^1\!cvf_i(v,w,c,t)dvdwdc\,,\,\, K_i^w(t):=\!\!\!\int\limits_0^{c_{\text{}max}}\!\!\int\limits_0^1\!\int\limits_0^1\!cwf_i(v,w,c,t)dvdwdc.
\end{equation}
Thus, $\forall i\in\mathbb{I}$ the  macroscopic system for action potential dynamics reads
{\allowdisplaybreaks 
\begin{equation}\label{macro_sis_inhom_2}
\begin{sistem}
\dfrac{dV_i(t)}{dt}\!=\!(\textbf{i}^i_{ext}+\gamma^i(\bar{v}-V_i(t))-W_i(t))\sum\limits_{j=1}^NB_{ij}m_c^j+\sum\limits_{j=1}^NB_{ij}(K_j^v(t)-m_c^jV_i(t))\\[0.5cm]
\dfrac{dW_i(t)}{dt}=(V_i(t)-aW_i(t))\sum\limits_{j=1}^Nm_c^jB_{ij}\\[0.5cm]
\begin{split}
\dfrac{dK_i^v(t)}{dt}=&\,(m_c^i\textbf{i}^i_{ext}+\gamma^i(\bar{v}m_c^i-K_i^v(t))-K_i^w(t))\sum\limits_{j=1}^NB_{ij}m_c^j\\
&+\sum\limits_{j=1}^NB_{ij}(m_c^iK_j^v(t)-m_c^jK_i^v(t))
\end{split}\\
\dfrac{dK_i^w(t)}{dt}=(K_i^v(t)-aK_i^w(t))\sum\limits_{j=1}^Nm_c^jB_{ij}\,.\\
\end{sistem}
\end{equation}}
Here, $m_c^i$ indicates the mean of the distribution of the neuron connections in the $i$-th macro-areas, i.e.,
\begin{equation}\label{g_mean}
m_c^i:=\int_0^{c_\text{max}} cg_i(c)dc\,.
\end{equation}
Moreover, given the graph ${\mathcal{G}=(\mathbb{I},\mathbb{E})}$ with a set $\mathbb{I}= \{1,2,...,N\}$ of nodes (brain macro-area) and a set $\mathbb{E}$ of arcs $(i,j)$, we indicate with $B$ its adjacent matrix, which allows to identify the probability of connections $B_{ij}$ appearing in  the kinetic equations \eqref{limit_eq_inhomog_i}. Specifically, for $i=j$, we set $B_{ij}=1$, while for $i\ne j$ we set
\begin{equation}\label{laplacian}
B_{ij}=
\begin{sistem}
b_{ij}>0\quad\quad\,\,\,\text{if $(i,j)\in\mathbb{E}$} \\[0.2cm]
0\quad\,\quad\quad\quad\,\,\,\,\,\text{otherwise }\,.
\end{sistem}
\end{equation}
In the following, we set $B_i:=\sum_{j=1}^NB_{ij}$ and we consider the cases of either undirected graphs, where all edges are bidirectional, directed graphs, where the edges have a direction, weighted graphs, where each edge has an associated numerical value, or combinations of them.

\section{Equilibria of the kinetic and macroscopic systems}\label{Sec_Eq}
In this section, we study the equilibrium distributions of the deduced kinetic and macroscopic systems. Starting from system \eqref{macro_sis_inhom}, from \cite{conte2023kinetic} we know that, if ${\textbf{i}^i_{ext}=\textbf{i}_{ext}}$ and ${\gamma^i=\gamma}$, $\forall i\in\mathbb{I}$, there is a unique equilibrium configuration, given by the $2N$-dimensional vector $({\bf V}^*,{\bf W}^*)$ of components
\begin{equation}\label{eq_inhomo}
\begin{sistem}
V^*_i=\dfrac{\textbf{i}_{ext}+\gamma\bar{v}}{\gamma+\dfrac{1}{a}}\quad\,\,\,\forall i=1...N\\[.7cm]
W^*_i=\dfrac{1}{a}V^*_i\qquad\,\quad\,\forall i=1...N\,.
\end{sistem}
\end{equation}
Instead, in general, if $\textbf{i}^i_{ext}\ne \textbf{i}^j_{ext}$ and $\gamma^i\ne \gamma^j$, for $i\ne j$ and $i,j\in\mathbb{I}$, the above expression for the equilibrium configuration does not hold. However, it is possible to prove that the equilibrium of system \eqref{macro_sis_inhom} exists and it is unique.
\begin{proposition}
Let us consider system \eqref{macro_sis_inhom} with $\textbf{i}^i_{ext}\ne \textbf{i}^j_{ext}$ and $\gamma^i\ne \gamma^j$ for $i,j\in\mathbb{I}$, $i\ne j$. It admits a unique equilibrium configuration $({\bf V}^*,{\bf W}^*)$ of components
\begin{equation}\label{sol_inhomo}
\begin{sistem}
V^*_i=\tilde{V}_i\qquad\qquad\,\forall i=1...N\\[0.2cm]
W^*_i=\dfrac{1}{a}\tilde{V}_i\qquad\,\quad\,\forall i=1...N\,.
\end{sistem}
\end{equation}
\end{proposition}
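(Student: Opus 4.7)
The plan is to reduce the equilibrium problem to a square linear system in the $V^*_i$'s alone and then establish existence and uniqueness by showing that the associated coefficient matrix is nonsingular via strict diagonal dominance.

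First, setting $dW_i/dt=0$ in the second equation of \eqref{macro_sis_inhom}, and observing that $B_i \ge B_{ii}=1>0$, one immediately obtains $W^*_i = V^*_i/a$, which already gives the second component of \eqref{sol_inhomo}. Substituting this relation into the equilibrium condition $dV_i/dt=0$ and using $\sum_{j}B_{ij}(V^*_j-V^*_i)=\sum_j B_{ij}V^*_j - B_i V^*_i$, together with $B_{ii}=1$, yields the linear system $M V^* = b$, where
$$
M_{ii} = B_i\bigl(\gamma^i+\tfrac{1}{a}+1\bigr) - 1, \qquad M_{ij} = -B_{ij}\ (j\ne i), \qquad b_i = B_i\bigl(\textbf{i}^i_{ext}+\gamma^i\bar v\bigr).
$$
Thus the entire equilibrium problem reduces to showing that $M$ is invertible; the vector $\tilde V$ mentioned in \eqref{sol_inhomo} is then defined as $\tilde V := M^{-1}b$.

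Second, I would verify that $M$ is strictly diagonally dominant. Since $B_{ij}\ge 0$ and $M_{ii}>0$, a direct computation using $\sum_{j\ne i}B_{ij}=B_i-1$ gives
$$
|M_{ii}| - \sum_{j\ne i}|M_{ij}| = B_i\bigl(\gamma^i+\tfrac{1}{a}+1\bigr)-1-(B_i-1) = B_i\bigl(\gamma^i+\tfrac{1}{a}\bigr) > 0,
$$
for every $i\in\mathbb I$, because $\gamma^i>0$, $a>0$ and $B_i\ge 1$. By the Levy--Desplanques theorem, $M$ is nonsingular, so the linear system $MV^*=b$ has a unique solution, which provides the unique equilibrium $(V^*,W^*)$.

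The only delicate point is the strict diagonal dominance check: the positive contribution $B_i(\gamma^i+1/a+1)$ on the diagonal has to beat the total off-diagonal mass $\sum_{j\ne i}B_{ij}=B_i-1$, and this works precisely because the convention $B_{ii}=1$ supplies the extra $+B_i$ term on the diagonal needed to cancel the row sum $B_i$ of the connectivity matrix, leaving the strictly positive remainder $B_i(\gamma^i+1/a)$. Everything else is routine algebraic rearrangement, and the heterogeneity of $\textbf{i}^i_{ext}$ and $\gamma^i$ across areas only affects $b$ and the diagonal of $M$, not the dominance argument.
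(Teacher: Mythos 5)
Your proposal is correct and follows essentially the same route as the paper: reduce the equilibrium conditions to $W^*_i=V^*_i/a$ plus a linear system $M\tilde{\bf V}={\bf b}$ with the same matrix $M$ (your $-1$ on the diagonal is the paper's $-B_{ii}$ under the convention $B_{ii}=1$) and the same load vector, then conclude nonsingularity from strict diagonal dominance via the identical computation $|M_{ii}|-\sum_{j\ne i}|M_{ij}|=B_i(\gamma^i+1/a)>0$. Your write-up is in fact slightly more explicit than the paper's in deriving the reduced system and in invoking the Levy--Desplanques theorem by name.
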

\begin{proof}
Let us consider the vector $\tilde{\bf V}$ solution of the linear system 
%\begin{equation}\label{linSis_inh}
$M{\bf \tilde{V}}= {\bf b}$
%\end{equation}
with load vector ${\bf b}$ of components ${b_i\!=\!B_i(\textbf{i}^i_{ext}+\gamma^i \bar{v})}$, for $i\in\mathbb{I}$, and associate matrix $M$ given by
\begin{equation}
M_{ij}=
\begin{sistem}
B_i\left(1+\gamma^i+\dfrac{1}{a}\right)-B_{ii} \quad\quad\,\,\ \text{if }\,i=j \\[0.4cm]
-B_{ij}\qquad\qquad\qquad\qquad\qquad\,\, \text{if }\,i\ne j \,.
\end{sistem}
\end{equation}
 $B$ is the adjacent matrix of the graph and $B_i=\sum_{j=1}^N B_{ij}$. We notice that $M$ is strictly diagonal dominant. In fact,
\begin{equation*}
\begin{split}
|M_{ii}|&=\left|B_i\left(1+\gamma^i+\dfrac{1}{a}\right)-B_{ii}\right|=\left|\sum_{\substack{j=1 \\ j\ne i}}^N B_{ij} +B_i\left(\gamma^i+\dfrac{1}{a}\right)\right|> \sum_{\substack{j=1 \\ j\ne i}}^N \left|M_{ij}\right| % \left|\sum_{\substack{j=1 \\ j\ne i}}^N B_{ij}\right|\\
%& \ge \sum_{\substack{j=1 \\ j\ne i}}^N \left|B_{ij}\right| =
\end{split}
\end{equation*}
Thus, due to the properties of the strictly diagonal dominant matrices, $M$ is not singular. This ensures the existence of a unique solution $\tilde{{\bf V}}$ for the associated linear system $M{\bf \tilde{V}}= {\bf b}$. Therefore, system \eqref{macro_sis_inhom} has a unique equilibrium configuration  $({\bf V}^*,{\bf W}^*)$ given by \eqref{sol_inhomo}.
\end{proof}
A similar argument can be applied to system \eqref{macro_sis_inhom_2}. 
\begin{proposition}
Let us denote by ${\bf V}$, ${\bf W}$, ${\bf K}^v$, and ${\bf K}^w$ the vectors of components $V_i$, $W_i$, $K_i^v$ and $K^w_i$, respectively. Moreover, let us consider $\textbf{i}^i_{ext}\ne \textbf{i}^j_{ext}$ and $\gamma^i\ne \gamma^j$ for $i,j\in\mathbb{I}$, $i\ne j$. Thus, system \eqref{macro_sis_inhom_2} has a unique equilibrium configuration $({\bf V}^*,{\bf W}^*, {\bf K}^{v*},{\bf K}^{w*})$ of components
\begin{equation}\label{part_sol_inhomo_2}
\begin{sistem}
V^*_i=\tilde{V}_i\qquad\qquad\qquad\,\forall i=1...N\\[0.3cm]
W^*_i=\dfrac{1}{a}\tilde{V}_i\qquad\qquad\quad\,\forall i=1...N\,\\[0.4cm]
K^{v*}_i=\tilde{K}^v_i\qquad\qquad\quad\,\,\forall i=1...N\\[0.3cm]
K^{w*}_i=\dfrac{1}{a}\tilde{K}^{v}_i\quad\qquad\,\,\quad\,\forall i=1...N\,.
\end{sistem}
\end{equation}
\end{proposition}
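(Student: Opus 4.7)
The plan is to adapt the proof of the previous proposition to the enlarged system by exploiting a triangular structure that emerges after elementary algebraic reductions. Setting every time derivative in \eqref{macro_sis_inhom_2} to zero, the stationary $W_i$ and $K_i^w$ equations yield $W_i^* = V_i^*/a$ and $K_i^{w*} = K_i^{v*}/a$, because the common prefactor $\sum_j m_c^j B_{ij}$ is strictly positive ($B_{ii}=1$ and $m_c^i>0$). Hence the proof reduces to showing that the remaining two stationary equations uniquely determine $(V^*, K^{v*})$.

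The key structural observation is that, once those two relations are substituted, the stationary $K_i^v$ equation involves only the unknowns $K_j^v$ (no $V_j$ appears), while the stationary $V_i$ equation is diagonal in $V$ and contains only $V_i$ together with the already-known $K_j^v$'s. The system is therefore block triangular: first one solves the linear subsystem for $K^v$, then recovers each $V_i^*$ from a single scalar equation whose leading coefficient $(\gamma^i + 1/a + 1)\sum_j B_{ij}m_c^j$ is strictly positive. The heart of the proof is thus uniqueness for the $K^v$-subsystem, and here lies the main obstacle: the matrix of that subsystem is not manifestly strictly diagonally dominant, because its diagonal scales with $\sum_j B_{ij} m_c^j$ while its off-diagonal entries carry the factor $m_c^i$ rather than $m_c^j$, so the estimate used in the previous proposition does not apply verbatim.

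To overcome this I would introduce the change of variables $Y_i := K_i^v / m_c^i$, well defined since $m_c^i > 0$. Dividing the $i$-th stationary $K^v$ equation by $m_c^i$ converts the subsystem into one with the same algebraic shape as the matrix $M$ of the previous proposition, but with $B_{ij}$ replaced by the weighted entries $\hat B_{ij} := B_{ij} m_c^j$ and row sums $\hat B_i = \sum_j B_{ij} m_c^j$. Strict diagonal dominance then reduces to
\[
(\gamma^i + 1/a + 1)\hat B_i - \hat B_{ii} > \sum_{j \ne i}\hat B_{ij} \iff (\gamma^i + 1/a)\hat B_i > 0,
\]
which holds under our positivity assumptions. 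The rescaled matrix is thus nonsingular, giving a unique $Y^*$ and hence a unique $\tilde K_i^v := m_c^i Y_i^*$. Plugging $\tilde K^v$ into the scalar $V_i$-equations produces the unique $\tilde V_i$, and together with $W_i^* = \tilde V_i/a$ and $K_i^{w*} = \tilde K_i^v/a$ this yields exactly the equilibrium configuration \eqref{part_sol_inhomo_2}.
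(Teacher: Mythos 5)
Your proposal is correct and follows the same overall strategy as the paper: set the time derivatives to zero, use the strictly positive prefactor $\sum_j m_c^jB_{ij}$ to get $W_i^*=V_i^*/a$ and $K_i^{w*}=K_i^{v*}/a$, exploit the block-triangular structure (the $K^v$ subsystem closes on its own, and the $V$ equations are then diagonal with positive coefficient $\tilde B_i(1+\gamma^i+1/a)$), and conclude by strict diagonal dominance of the $K^v$ subsystem. Where you genuinely add something is in the treatment of that subsystem. As you observe, the raw coefficient matrix of the stationary $K^v$ equations has off-diagonal entries $-m_c^iB_{ij}$ (row-indexed mean), so row diagonal dominance is \emph{not} automatic: it would require $\tilde B_i(1+\gamma^i+1/a)>m_c^iB_i$, which can fail when $m_c^i$ dominates the neighbouring means. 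The paper instead writes $M^K_{ij}=-m_c^jB_{ij}$ for $i\ne j$ and checks dominance for that matrix; as stated, this does not match the system \eqref{macro_sis_inhom_2}. Your rescaling $Y_i=K_i^v/m_c^i$ is exactly the diagonal similarity $D^{-1}M^KD$ with $D=\mathrm{diag}(m_c^i)$ that turns the true matrix into the one the paper analyzes (off-diagonals $-B_{ij}m_c^j$, row sums $\tilde B_i$), for which dominance reduces to $(\gamma^i+1/a)\tilde B_i>0$. Since similar matrices are simultaneously nonsingular, the paper's conclusion is unaffected, but your version supplies the missing step that makes the diagonal-dominance argument legitimate. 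The only implicit hypothesis you share with the paper is $m_c^i>0$ for all $i$ (needed both for $\tilde B_i>0$ and for your change of variables), which is harmless but worth stating.
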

\begin{proof}
Firstly, we notice that the equations for ${\bf K}^v$ and ${\bf K}^w$ in \eqref{macro_sis_inhom_2}  are independent of the ones for ${\bf V}$ and ${\bf W}$.  The related equilibrium distribution  $({\bf K}^{v*},{\bf K}^{w*})$ is such that ${K}^{v*}_i$ is the $i$-th component of the vector $\tilde{\bf K}^v$, solution of the linear system  $M^K{\bf \tilde{K}}^v= {\bf b}^K\,.$ This system has load vector ${\bf b}^K$ of components $b^K_i=m_c^i\tilde{B}_i(\textbf{i}^i_{ext}+\gamma^i \bar{v})$ and associated matrix $M^K$ of form 
\begin{equation}
M^K_{ij}=
\begin{sistem}
\tilde{B}_i\left(1+\gamma^i+\dfrac{1}{a}\right)-m_c^iB_{ii} \quad\quad\,\,\ \text{if }\,i=j \\[0.4cm]
-m_c^jB_{ij}\qquad\qquad\qquad\qquad\qquad\,\, \text{if }\,i\ne j 
\end{sistem}
\end{equation}
with $\tilde{B}_i:=\sum_{j=1}^N m_c^jB_{ij}$. This matrix is strictly diagonal dominant since
\begin{equation*}
\begin{split}
|M_{ii}|&=\left|\tilde{B}_i\left(1+\gamma^i+\dfrac{1}{a}\right)-m_c^iB_{ii} \right|=\left|\sum_{\substack{j=1 \\ j\ne i}}^N m_c^jB_{ij} +\tilde{B}_i\left(\gamma^i+\dfrac{1}{a}\right)\right|>\sum_{\substack{j=1 \\ j\ne i}}^N \left|M_{ij}\right| 
\end{split}
\end{equation*}
 This ensures the existence of a unique equilibrium $({\bf{K}}^{v*},{\bf K}^{w*})$ for ${\bf K}^v$ and ${\bf K}^w$ in \eqref{part_sol_inhomo_2}. Moreover, having this solution, it is possible to deduce the unique equilibrium configuration $({\bf V^*},{\bf W}^*)$. It has the expression in \eqref{sol_inhomo} with ${\bf\tilde{V}}$ solution of the linear system  $M^V{\bf \tilde{V}}= {\bf b}^V\,,$
 with load vector ${\bf b}^V$ of components ${b_i=\tilde{B}_i\left(\textbf{i}^i_{ext}+\gamma^i \bar{v}\right)+\sum_{j=1}^NB_{ij}K^v_j}$ and non singular diagonal matrix $M^V$ such that $M_{ii}^V=\tilde{B}_i\left(1+\gamma^i+\frac{1}{a}\right)$.
 \end{proof}

The knowledge of the equilibrium configurations for the macroscopic systems describing the evolution of the average membrane potential and recovery variable allows  to easily deduce the equilibrium distribution for their kinetic counterpart. 
\begin{proposition}\label{eq_kinetic}
Let us consider the kinetic model in the weak form given by equation \eqref{limit_eq_inhomog_i}. Thus, $\forall i\in\mathbb{I}$ and $\forall (v,w) \in [0,1]^2\,,c \in [0, c_{\text{max}}]\,, t\ge 0$, the distribution function defined by 
\begin{equation}\label{f_eq_kin}
f_i^0(v,w,c,t)=\delta(v-V_i^*)\delta(w-W_i^*)g_i(c)
\end{equation}
 is equilibrium distribution of equation \eqref{limit_eq_inhomog_i}. 
\end{proposition}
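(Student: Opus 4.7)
The plan is to verify the claim by direct substitution of $f_i^0$ into the weak kinetic equation \eqref{limit_eq_inhomog_i} and checking that both sides vanish. Since $f_i^0$ has no explicit time dependence, $\frac{d}{dt}\int\Phi\,f_i^0\,dv\,dw\,dc = 0$ identically for every admissible test function $\Phi$, so the task reduces to showing that the collision integral on the right vanishes as well when $f_i$ and $f_j$ are replaced by $f_i^0$ and $f_j^0$.

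To handle the right-hand side, I would use the sifting property of the two Dirac factors to collapse the integrations in $v, w, v_*, w_*$, pinning the pre-interaction states to $(V_i^*, W_i^*)$ and $(V_j^*, W_j^*)$. Applying the microscopic rule \eqref{micro_rule} then yields post-interaction values $v' = V_i^* + \textbf{i}_{ext}^i + \gamma^i(\bar v - V_i^*) + (V_j^* - V_i^*) - W_i^*$ and $w' = W_i^* + V_i^* - a W_i^*$, both independent of $c$ and $c_*$. The remaining integration in $c,c_*$ against $G(c,c_*)g_i(c)g_j(c_*)$ yields the factor $1$ if $G\equiv 1$ and the factor $m_c^j$ if $G = c_*$, so what survives is a sum over $j$ of $B_{ij}\,[\Phi(v',w',c) - \Phi(V_i^*, W_i^*, c)]$ times one of these factors, times $g_i(c)$.

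Then I would recognize that choosing $\Phi = v$ and $\Phi = w$ in the reduced expression reproduces precisely the right-hand sides of the macroscopic systems \eqref{macro_sis_inhom} (case $G\equiv 1$) or \eqref{macro_sis_inhom_2} (case $G = c_*$), evaluated at the equilibrium configurations of the preceding propositions, and hence vanishes. For the $G = c_*$ case one further needs the closure $K_i^{v*} = V_i^*\,m_c^i$ and $K_i^{w*} = W_i^*\,m_c^i$, which holds automatically for the Dirac ansatz because the $\delta$-factors in $v,w$ decouple from $g_i(c)$; this shows that the full vector $({\bf V}^*, {\bf W}^*, {\bf K}^{v*}, {\bf K}^{w*})$ of \eqref{part_sol_inhomo_2} is realized by $f_i^0$.

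The only step that requires real attention is the $G = c_*$ case, where one must carry through the $c$-weights and verify that the Dirac closure matches the equilibrium vectors $\tilde{\bf K}^v$ and $\tilde{\bf K}^w$ constructed in the previous proof. Once this identification is made, the vanishing of the collision integral is an immediate consequence of the macroscopic equilibrium conditions already established in the propositions above.
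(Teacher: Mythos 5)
Your overall strategy---substitute the Dirac ansatz, use the sifting property to pin the pre-interaction states, and reduce the collision term to an integral over $c,c_*$---is exactly the paper's. The genuine gap is in the final step. A distribution is an equilibrium of the weak kinetic equation \eqref{limit_eq_inhomog_i} only if the right-hand side vanishes for \emph{every} admissible test function $\Phi(v,w,c)$, whereas you verify the cancellation only for $\Phi=v$ and $\Phi=w$ by appealing to the macroscopic equilibrium conditions. That shows the first moments of $f_i^0$ are stationary, which is strictly weaker than the claim: stationarity of two linear moments does not imply stationarity of the distribution. The paper closes this differently: after sifting, it shows that the post-interaction pair appearing in the bracket coincides with the pre-interaction pair, i.e. $v'=V_i^*$ and $w'=\frac{1}{a}V_i^*=W_i^*$, so that $\Phi(v',w',c)-\Phi(V_i^*,W_i^*,c)=0$ pointwise and each term of the $j$-sum vanishes for arbitrary $\Phi$. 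Your proof never establishes this pointwise identity, and it cannot be recovered from the moment equations alone: taking $\Phi(v,w,c)=(v-V_i^*)^2$ makes every term of the $j$-sum nonnegative, so the sum vanishes only if $v'=V_i^*$ for each $j$ with $B_{ij}>0$, a condition not implied by the single averaged relation $\sum_j B_{ij}(\cdots)=0$ that you invoke.

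A secondary observation: your expression for $v'$ correctly retains the coupling term $(V_j^*-V_i^*)$ coming from the rule \eqref{micro_rule}, while the paper's displayed computation drops it. The pointwise cancellation the paper relies on therefore requires both $\textbf{i}^i_{ext}+\gamma^i(\bar v-V_i^*)-W_i^*=0$ and $V_j^*=V_i^*$ for the connected nodes, i.e. it pertains to the synchronized equilibrium \eqref{eq_inhomo}; for the genuinely heterogeneous equilibria of the two preceding propositions the individual brackets do not vanish for nonlinear $\Phi$. This is worth flagging, but it does not rescue your argument: to prove the proposition one must show the bracket vanishes for all $\Phi$, not merely that its first moments do.
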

\begin{proof}
For a fixed $i\in\mathbb{I}$, let us consider a generic term in the sum on the right-hand side of equation \eqref{limit_eq_inhomog_i}. Dropping the dependency on $(v,w,c)$ in the distribution functions, if we evaluate this term in $f_i(v,w,c,t)=f_i^0(v,w,c,t)$ and ${f_j(v_*,w_*,c_*,t)=f_j^0(v_*,w_*,c_*,t)}$ we have 
\begin{equation}
\begin{split}
&\iint\limits_0^1\iint\limits_0^1\iint\limits_0^{c_{\text{max}}}B_{ij}G(c,c_*)[\Phi(v',w',c) -\Phi(v,w,c)]f_i^0f_j^0dcdc_*dwdw_*dvdv_*\\[0.4cm]
&=\iint\limits_0^1\iint\limits_0^1\iint\limits_0^{c_{\text{max}}}B_{ij}G(c,c_*)[\Phi(v',w',c) -\Phi(v,w,c)]\times\\[0.2cm]
&\hspace{1cm}\delta(v-V_i^*)\delta(w-W_i^*)g_i(c)\delta(v_*-V_j^*)\delta(w_*-W_j^*)g_j(c_*)dcdc_*dwdw_*dvdv_*\\[0.4cm]
&=\iint\limits_0^{c_{\text{max}}}\left[\Phi\left(V_i^*+\textbf{i}^i_{ext}+\gamma^i(\bar{v}-V_i^*)-W_i^*,\frac{1}{a}V_i^*,c\right) -\Phi\left(V_i^*,W_i^*,c\right)\right]\times\\[0.2cm]
&\hspace{1cm}B_{ij}G(c,c_*)g_i(c)g_j(c_*)dcdc_*\\[0.4cm]
&=\iint\limits_0^{c_{\text{max}}}B_{ij}G(c,c_*)\left[\Phi\left(V_i^*,W_i^*,c\right) -\Phi\left(V_i^*,W_i^*,c\right)\right]g_i(c)g_j(c_*)dcdc_*=0\,.
\end{split}
\end{equation}
Thus, the distribution function $f_i^0(v,w,c,t)$ in \eqref{f_eq_kin} is equilibrium distribution of \eqref{limit_eq_inhomog_i}, $\forall i\in\mathbb{I}$.
\end{proof}
As a consequence of Proposition \ref{eq_kinetic}, we observe that the distribution
\begin{equation}\label{equil_f_inhomo}
f^*=f^*(x,v,w,c,t):=\sum_{i=1}^Nf_i^0(v,w,c,t)\delta(x-i)
\end{equation}
is equilibrium distribution for equation \eqref{limit_eq_inhomog}.

\section{Numerical simulations}
\label{simulation}
In this section, we present some numerical tests obtained by numerical integration of the macroscopic systems \eqref{macro_sis_inhom} and \eqref{macro_sis_inhom_2}. We recall that system \eqref{macro_sis_inhom} is derived from  \eqref{V_gen_inhomo}-\eqref{W_gen_inhomo} with the microscopic rule  \eqref{micro_rule} and assuming $G(c,c_*)=1$, while system \eqref{macro_sis_inhom_2} is derived from  \eqref{V_gen_inhomo}-\eqref{W_gen_inhomo} with the microscopic rule  \eqref{micro_rule} and assuming $G(c,c_*)=c_*$. For the numerical simulations, we use a fourth order Runge--Kutta method. We set $\bar{v}=1$, $a=0.6$, and we vary $\gamma^i$ and $\textbf{i}^i_{ext}$ as indicated in each numerical tests. The initial conditions are set to $V(0)=0$ and $W(0)=0$ for system \eqref{macro_sis_inhom}, while for system \eqref{macro_sis_inhom_2} we consider the additional initial conditions $K^v(0)=0$ and $K^w(0)=0$ .

To show the capability of the proposed approach to capture a large variety of scenarios, we present three main numerical tests.
\begin{itemize}
\item[{\bf Test 1:}] in Section \ref{OrWe}, we consider system \eqref{macro_sis_inhom} in the case of directed and weighted graphs. This combined choice of directed and weighted graphs is motivated by the fact that the membrane potential dynamics are influenced by a combination of excitatory/inhibitory connections, which are unidirectional, and the physical distance between the brain macro-areas, taken into account  by weighting the graph edges.
\item[{\bf Test 2:}] in Section \ref{GammaCurr}, we analyze the effect on system \eqref{macro_sis_inhom} of heterogeneities in the system coefficients $\gamma^i$ and $\textbf{i}^i_{ext}$, for $i\in\mathbb{I}$. This allows us to observe the emergence of asynchronous equilibrium manifolds in the system evolution. 
\item[{\bf Test 3:}] in Section \ref{Asim_conn}, we consider system \eqref{macro_sis_inhom_2} assuming different characteristics for the distribution of the connections $g_i(c)$ in each node of the graph, showing the impact on the spiking dynamics. 
\end{itemize}

\subsection{Test 1: directed and weighted graph}\label{OrWe}
We analyze here two examples ({\bf A} and {\bf B}) of directed and weighted graphs, consisting of $N=5$ macro-areas differently connected, and we consider the dynamics given by system \eqref{macro_sis_inhom}. The configurations of nodes and edges are described through the adjacency matrix $B_{ij}$, while we set $\textbf{i}^i_{ext}=0.5$, $\gamma^i=0.7$, $\forall i=1,\dots,5$. 

Firstly, as {\bf Example A}, we consider the graph described by the following adjacency matrix 
\begin{equation}\label{B_exA} 
B_{\bf A}=
\begin{pmatrix}
1 &\,\,0.25&\,\, 0&\,\, 0&\,\,\, 0\\
0&\,\,1 &\,\,0 &\,\,0.5 &\,\,\,1\\
0&\,\, 0&\,\, 1&\,\, 0&\,\, \,0\\
0 &\,\,0 &\,\,0.25 &\,\,1 &\,\,\,0\\
0 &\,\,0 &\,\,1 &\,\,0 &\,\,\,1
\end{pmatrix}\,.
\end{equation}
The corresponding graph is schematized in the left plot of Figure \ref{Graph_exAB}. Instead, as {\bf Example B}, we consider a ring graph described by the following adjacency matrix 
\begin{equation}\label{B_exB} 
B_{\bf B}=
\begin{pmatrix}
1 &\,\,\,\,1&\,\, 0&\,\, 0&\,\, 1\\
0&\,\,\,\,1 &\,\,0.5 &\,\,0 &\,\,0\\
0&\,\, \,\,0&\,\, 1&\,\, 0.25&\,\, 0\\
0 &\,\,\,\,0 &\,\,0&\,\,1 &\,\,0.5\\
0 &\,\,\,\,0 &\,\,0&\,\,0 &\,\,1
\end{pmatrix}\,.
\end{equation}
and the corresponding graph is schematized in the right plot of Figure \ref{Graph_exAB}. 
\begin{figure}[!h]
     \centering
  \includegraphics[width=.85\textwidth]{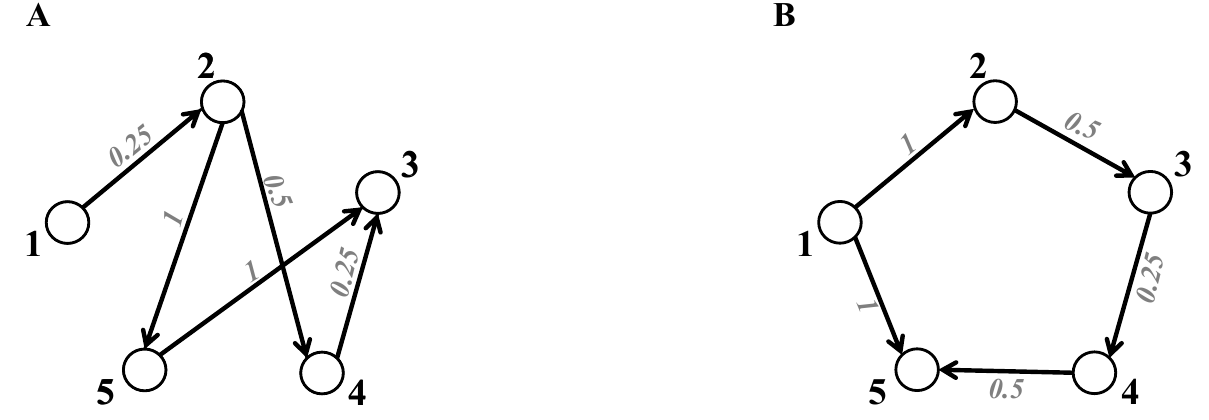}
\caption{{\bf Test 1.} Schematic representation of the graphs described by the adjacent matrices \eqref{B_exA} (Example {\bf A} - left plot) and  \eqref{B_exB} (Example {\bf B} - right plot), respectively.}
    \label{Graph_exAB}
\end{figure}
The  dynamics of the potentials $V_i(t)$ emerging from system \eqref{macro_sis_inhom} with these two different network topologies are shown in Figure \ref{Dynam_exAB}.
\begin{figure}[!h]
     \centering
  \includegraphics[width=.87\textwidth]{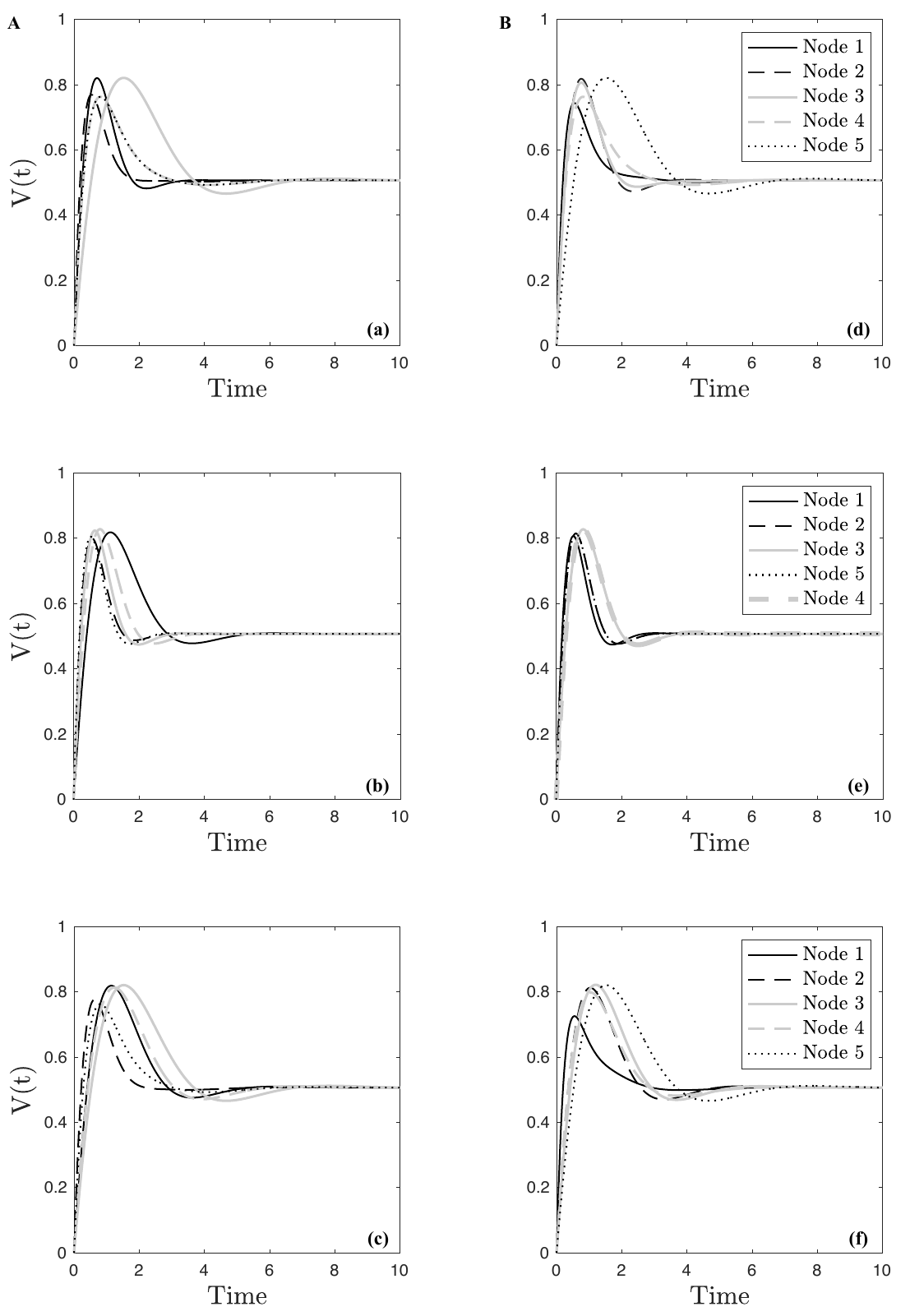}
\caption{{\bf Test 1.} Dynamics of the potentials $V_i(t)$, for $i=1,\dots,5$, emerging from system \eqref{macro_sis_inhom} on the networks schematized in Figure \ref{Graph_exAB}. Left column (panels a-c) refers to Example {\bf A}, while right column (panels d-f) to Example {\bf B}. The rows show the cases of directed-non weighted (top), undirected-weighted (middle), and directed-weighted (bottom) graphs, respectively.}
    \label{Dynam_exAB}
\end{figure}
\noindent As a matter of comparison, we include in the first two rows some aspects identified in \cite{conte2023kinetic}. Precisely, for a directed-non weighted graph (panels a and d) the timing for neuron spike is related to the net outflow of each node, identified with 
\begin{equation}\label{Oquan}
{O}_i=\sum_{j=1}^N (B_{ij}-B_{ji})\,.
\end{equation}
The higher the value of $O_i$ is, the earlier the spike occurs. Then, if the graph is undirected and weighted (panels b and e) , the timing for neuron spike
is related to the quantity
\begin{equation}\label{Tquan}
T_i=\sum_{j=1}^NB_{ij}
\end{equation}
which is the sum of the weights on the edges related to the $i$-th node. The greater $T_i$ is, the faster the $i$-th action potential spikes. When a combination of direction and weights for each edge (panels c and f)  is considered, the dynamics of the $i$-th node are characterized by the same quantity $T_i$ defined in \eqref{Tquan}. However, in this case, it represents the sum of the weights related only to the outgoing edges. This means that nodes which are closer to others and with more outgoing edges are more likely to show an earlier action potential spike than farther nodes or nodes with more (or only) incoming edges. Moreover, we can observe that, for the nodes with the same value of $O_i$ (for the directed-non weighted graphs) or $T_i$ (for the undirected/directed and weighted graphs), the differences in their action potential dynamics are determined by the characteristics of the node to which they are connected. For completeness, we collect all the values of $O_i$ and $T_i$ for each node in the case of directed-non weighted (a and d),  undirected-weighted (b and e), and directed-weighted (c and f) graphs in Tables \ref{TabTest1_A} (for Example {\bf A}) and \ref{TabTest1_B} (for Example {\bf B}), respectively. In these tables, for each case, we also indicate the timing of neuron spike $t_S$ that represents the time at which the potential $V_i(t)$ reaches its maximum. For instance, referring to Example {\bf A}, in Figure \ref{Dynam_exAB}-a node 3 has $O_3=-2$, the smallest value among all the nodes, and, in fact, it is the last one to spike, with a timing of neuron spike $t_S=1.51$, as indicated in Table \ref{TabTest1_A}. Moreover, we observe that while nodes 4 and 5 have identical values for $O_i$ and are connected to the same nodes (2 and 3), leading to synchronized neuron spike timings, nodes 1 and 2, despite having the same $O_i$ value, exhibit different spike timings $t_S$. This difference arises from their distinct connection patterns: node 1 is connected solely to node 2, while node 2 also has additional connections to nodes 4 and 5, resulting in faster spike timing. For the undirected-weighted case in Figure \ref{Dynam_exAB}-b, node 3 has $T_3=1.25$, which is the intermediate value between the fastest nodes (2 and 5) and the slowest ones (1 and 4). When we combine directions and weights in Figure \ref{Dynam_exAB}-c, then the quantity $T_3$ becomes $T_3=0$, because node 3 does not have any outgoing edge. As it is the only one without any outgoing edge, it results to be the one with the slowest dynamics. Similar observations can be done on Example {\bf B}, looking, for instance, at the dynamics of node 5, which has $O_5=-2$ in Figure \ref{Dynam_exAB}-d (the smaller $O_i$ value) and $T_5=1.5$ in Figure \ref{Dynam_exAB}-e (the second bigger value, after $T_1=2$). However, when in Figure \ref{Dynam_exAB}-f a combined directed-weighted graph is considered, then $T_5=0$, therefore, the dynamics of node 5 is the slowest one. Similar arguments as above apply to Figure \ref{Dynam_exAB}-d to explain why the nodes 2, 3, and 4 share the same value of $O_i$ but exhibit different spiking times. In contrast, in Figure \ref{Dynam_exAB}-e, nodes 3 and 4 have the same value of $T_i$ and are connected to nodes 2 and 5, which also share the same $T_i$ values, leading to synchronization between nodes 3 and 4. Therefore, these observations sustain our hypothesis that directed connections among nodes support action potential activation and its faster propagation in the network, as well as their spatial nearness. 

 \begin{table}[h!]
\begin{center}
   \begin{tabular}{c|c|c|c|c|c|c}
   \toprule  
   \rule{0pt}{2.5ex}  \multirow{2}{*}{Nodes} \,\,& \multicolumn{2}{|c|}{{\bf (a)}}& \multicolumn{2}{|c|}{{\bf (b)}} & \multicolumn{2}{|c}{{\bf (c)}} \\\cline{2-7}
   \rule{0pt}{2.5ex} &\,\,\,\,$O_i$\,\, \,\,& \,\,\,\,$t_S$ \,\,\,\,& \,\,\,\,$T_i$\,\,\,\,&\,\, \,\,$t_S$\,\,\,\,&\,\, \,\,$T_i$\,\,\,\,&\,\,\,\,$t_S$\,\, \,\, \\
  \midrule
  \rule{0pt}{3ex} 1 & 1 &0.71 &0.25 &1.13 &0.25 &1.15 \\[0.5ex]\hline
   \rule{0pt}{3ex} 2 & 1 &0.55 & 1.75& 0.58& 1.5&0.61 \\[0.5ex]\hline
   \rule{0pt}{3ex} 3 & -2&1.51 & 1.25& 0.64&0 & 1.53\\[0.5ex]\hline
   \rule{0pt}{3ex} 4 & 0 & 0.81& 0.75& 0.81& 0.25&1.24 \\[0.5ex]\hline
   \rule{0pt}{3ex} 5 & 0 & 0.81&2 &0.52 & 1&0.83 \\[0.5ex]
\bottomrule
    \end{tabular}
\end{center}
\caption{{\bf Test 1-A.} Values for the quantities $O_i$, defined in \eqref{Oquan}, $T_i $, defined in \eqref{Tquan}, and $t_S$, which describes the timing for neuron spike, in the simulations shown in Figure \ref{Dynam_exAB}-A.}
 \label{TabTest1_A}
\end{table}

 \begin{table}[h!]
\begin{center}
   \begin{tabular}{c|c|c|c|c|c|c}
   \toprule  
   \rule{0pt}{2.5ex}  \multirow{2}{*}{Nodes} \,\,& \multicolumn{2}{|c|}{{\bf (d)}}& \multicolumn{2}{|c|}{{\bf (e)}} & \multicolumn{2}{|c}{{\bf (f)}} \\\cline{2-7}
   \rule{0pt}{2.5ex} &\,\,\,\,$O_i$\,\, \,\,& \,\,\,\,$t_S$ \,\,\,\,& \,\,\,\,$T_i$\,\,\,\,&\,\, \,\,$t_S$\,\,\,\,&\,\, \,\,$T_i$\,\,\,\,&\,\,\,\,$t_S$\,\, \,\, \\
  \midrule
   \rule{0pt}{3ex} 1 & 2 &0.51 &2 & 0.52 &2 &0.54 \\[0.5ex]\hline
   \rule{0pt}{3ex} 2 & 0 &0.76 & 1.5& 0.6& 0.5& 1.05\\[0.5ex]\hline
   \rule{0pt}{3ex} 3 & 0&0.77 & 0.75& 0.85&0.25&1.18 \\[0.5ex]\hline
   \rule{0pt}{3ex} 4 & 0 & 0.8& 0.75&0.85 & 0.5& 1.06\\[0.5ex]\hline
   \rule{0pt}{3ex} 5 & -2 &1.55 &1.5 &0.6 & 0&1.52 \\[0.5ex]
\bottomrule
    \end{tabular}
\end{center}
\caption{{\bf Test 1-B.} Values for the quantities $O_i$, defined in \eqref{Oquan}, $T_i $, defined in \eqref{Tquan}, and $t_S$, which describes the timing for neuron spike, in the simulations shown in Figure \ref{Dynam_exAB}-B.}
 \label{TabTest1_B}
\end{table}

\subsection{Test 2: heterogenous parameter network}\label{GammaCurr}
We analyze here the effect of possible heterogeneities in the nodes (and, thus, in the brain macro-areas) characteristics which can influence the dynamics of system  \eqref{macro_sis_inhom}. We consider the simplest situation of undirected and non-weighted ring graph of $N=5$ nodes. If $\textbf{i}^i_{ext}=0.5$, $\gamma^i=0.7$, $\forall i=1,\dots,5$, then the emerging dynamics are characterized by a full synchronization of the action potentials, since all the nodes are identical and it is not possible to distinguish their dynamics, as shown in Figure \ref{GammaIcurr_equal} (right plot).

\begin{figure}[!h]
     \centering
  \includegraphics[width=.82\textwidth]{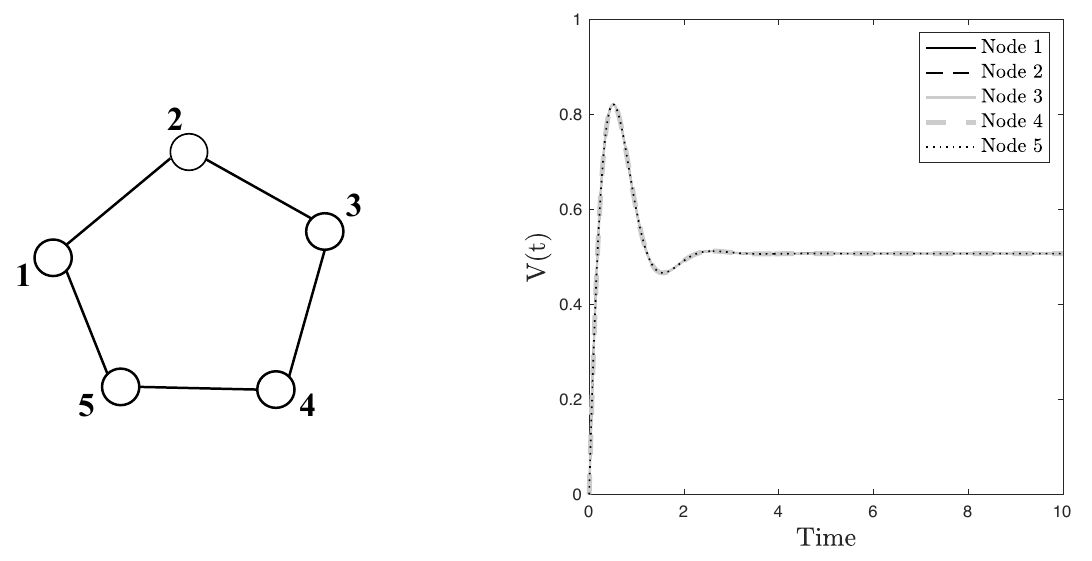}
\caption{{\bf Test 2.} Schematic representation of an undirected and non-weighted ring graph (left plot). Dynamics of the potentials $V_i(t)$, for $i=1,\dots,5$ emerging from system \eqref{macro_sis_inhom} (right plot).}
    \label{GammaIcurr_equal}
\end{figure}

\noindent Instead, if we assume that the values of the external currents $\textbf{i}^i_{ext}$ and the relaxation parameters $\gamma^i$ may change across the nodes, different dynamics emerge. In Figure \ref{GammaIcur} we show the system evolution for $\gamma^i=0.1+0.125\cdot(i-1)$  and $\textbf{i}^i_{ext}=0.5$, $\forall i=1,\dots,5$ (left plot), or ${\textbf{i}^i_{ext}=0.1+0.1\cdot(i-1)}$ and  $\gamma^i=0.7$, $\forall i=1,\dots,5$ (right plot). 
 \begin{figure}[!h]
     \centering
  \includegraphics[width=.95\textwidth]{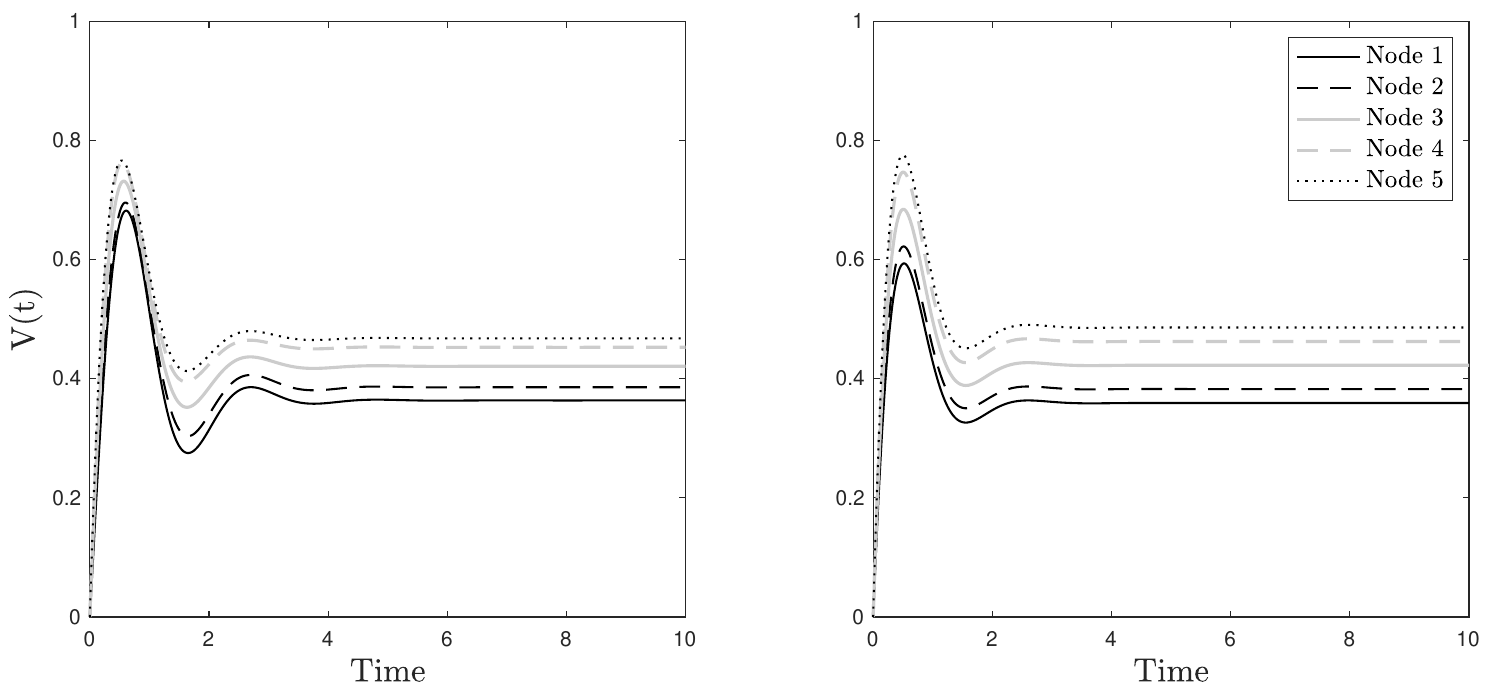}
\caption{{\bf Test 2.} Dynamics of the potentials $V_i(t)$, for $i=1,\dots,5$, emerging from system \eqref{macro_sis_inhom} on the network schematized in the left plot of Figure \ref{GammaIcurr_equal} when different values of $\gamma^i$ (left plot) and $\textbf{i}^i_{ext}$ (right plot) are assigned to the different nodes.}
    \label{GammaIcur}
\end{figure}
From Figure \ref{GammaIcur} we observe that varying the values of $\gamma^i$ across the nodes affects the resting value on which the potentials tend to relax, as well as the amplitude of the undershoot. Even if the timing of the spike does not change across the nodes and with respect to the dynamics shown in Figure \ref{GammaIcurr_equal}, we notice that the greater $\gamma^i$, the bigger the value of the resting potential on which each node stabilizes and the smaller the amplitude of the corresponding undershoot. This double effect is due to the fact that there is a nonlinear dependency on $\gamma^i$ of the equilibrium distribution of the potentials. Instead, varying $\textbf{i}^i_{ext}$ across the nodes does not affect the shape of the spike, but simply causes a shift in the action potential profiles, which maintains the same amplitude for both the peak and the undershoot and linearly changes the value of the resting state.

\subsection{Test 3: heterogenous distribution of the connections}\label{Asim_conn}
The last test analyzes the dynamics of system \eqref{macro_sis_inhom_2} in the special case of the ring graph illustrated in Figure \ref{GammaIcurr_equal} (left plot). In particular, we assume that each node has a different distribution function of the connections $g_i(c)$, characterizing it by its mean value $m_c^i$ defined in \eqref{g_mean}. In Figure \ref{HeterConnection}, we show the evolution of system  \eqref{macro_sis_inhom_2} for $m_c^i=0.5$, $\forall i=1,\dots,5$ (left plot) or for ${m_c^i=1-0.225\cdot(i-1)}$, $\forall i=1,\dots,5$ (right plot).
\begin{figure}[!h]
     \centering
  \includegraphics[width=.95\textwidth]{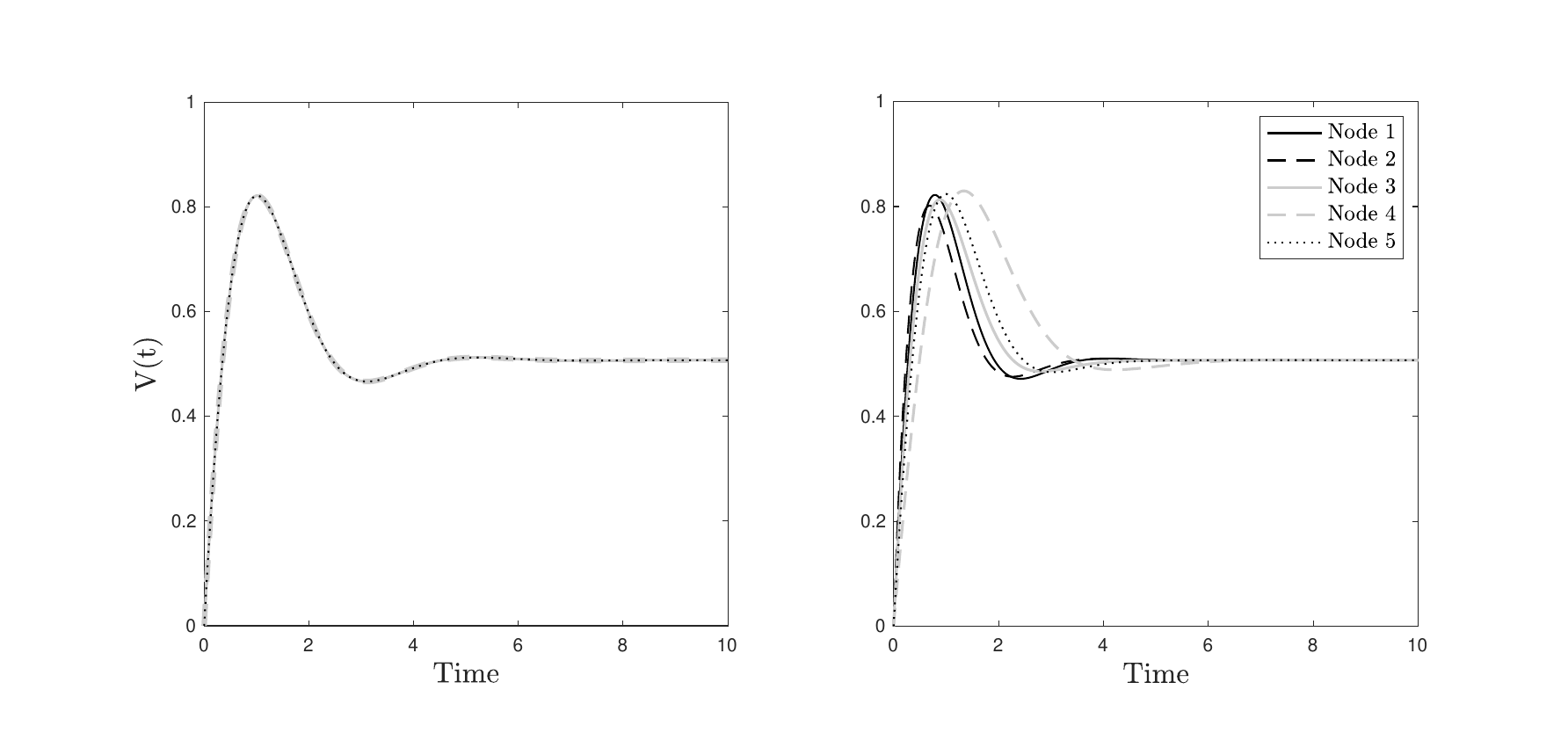}
\caption{{\bf Test 3.} Dynamics of the potentials $V_i(t)$, for $i=1,\dots,5$, emerging from system \eqref{macro_sis_inhom_2} on the networks schematized in the left plot of Figure \ref{GammaIcurr_equal} for the same value (left plot) or different values (right plot) of $m_c^i$ across the nodes.}
    \label{HeterConnection}
\end{figure}
The difference between systems \eqref{macro_sis_inhom_2} and \eqref{macro_sis_inhom} relies on the choice of $G(c,c_*)$ that, consequently, determines different macroscopic equations for the potentials. Assuming the same mean for the connection distribution within each node and comparing the right plot of Figure \ref{GammaIcurr_equal} with the left plot of Figure \ref{HeterConnection} we notice that the asymmetric choice of the kernel $G(c,c_*)$ determines a slowdown of the spiking dynamics, with spikes of similar amplitude but longer duration. However, it does not de-synchronize the network. If, instead, we consider different mean values of the distribution $g_i(c)$, then a different scenario emerges. In this case, the spiking dynamics depend on the quantity
\[
m_i=\sum_{j=1}^N B_{ij}m_c^j
\]
that, for each $i$-th node, takes into account the mean of the distribution function of the $i$-th node and the ones connected to it. The greater the value of $m_i$ the faster the emergence of the peak. The amplitude, instead, seems to be not affected. In the example of Figure \ref{HeterConnection} (right plot), for instance, node 4 (the slowest) has $m_4=0.975$, while node 2 (the fastest) has $m_2=2.325$. Therefore, even when a discrete ring network with no differences in the node parameters or the edge orientations and weights is considered, the asymmetric assumption on $G(c,c_*)$ allows the emergence of  heterogenous scenarios in the membrane potential evolution.

\section{Discussion}\label{conclusion}

In this note, we studied the dynamics of the action potential propagation with a combination of statistical and discrete descriptions of the brain connections, accounting for heterogeneity in the brain regions. We started from the spatially inhomogeneous framework established in \cite{conte2023kinetic} where the coupling between a local and statistical description of the interactions between neurons in a single brain region with a non-local and discrete description of the interactions between connected brain areas was considered. In particular, we extended the study in order to account for different types of heterogeneities in the nodes representing the brain macro-areas.

Under the assumption $G(c,c_*)=1$, we considered a microscopic rule that accounts for heterogeneities across the nodes in either the external source current or the relaxation parameter. We derived the corresponding macroscopic system for the average quantities and we analyzed its equilibrium manifold. In particular, it is possible to analytically show that there exists a unique equilibrium configuration for the system, independently of the network topology. Instead, assuming $G(c,c_*)=c_*$, we showed that the derivation of a closed macroscopic system requires to consider additional equations for the mixed first-order moments. We analyzed the corresponding macroscopic system, proving the existence of a unique equilibrium for a general graph. We also proved a general result on the underlying kinetic model, namely we showed that it admits at least one equilibrium distribution.

With the numerical tests proposed in the last section, we reinforced our observations about the influence of the specific brain region connections on the overall dynamics, showing how combinations of directed and weighted edges in the graph may affect the propagation of the action potential. Moreover, we analyzed the impact of the heterogeneities in the nodes' parameters on the dynamics, noticing a de-synchronization of the resting state of the action potential, as well as changes in the shape of the spike. Finally, we showed the effect of choosing an asymmetric interaction kernel $G(c,c_*)$, connecting the timing of the spike to the characteristics of the distribution of the connections in each node and its neighbors. 

Future plans will involve a more realistic description of the neural network, as the simple five-node network has been chosen only for an illustrative purpose. Moreover, we plan to use this modeling framework for other applications, specifically involving the description of disease progression into the brain \cite{bertsch2017JPA,bertsch2017alzheimer}.

\section*{Acknowledgement}
The authors  would like to thank Prof. Andrea Tosin for helpful discussions. The authors are members of INdAM-GNFM. The authors MB and MG acknowledge the support from COST Action CA18232 MAT-DYN-NET. The work has been performed in the frame of the project PRIN 2022 PNRR {\it ''Mathematical Modelling for a Sustainable Circular Economy in Ecosystems''} (project code P2022PSMT7) funded by the European Union - Next Generation EU, PNRR-M4C2-I 1.1 CUP:  D53D23018960001, and by MUR-Italian Ministry of Universities and Research. MC has been partially supported by the State Research Agency of the Spanish Ministry of Science and FEDER-EU, project PID2022-137228OB-I00 (MICIU/AEI /10.13039/501100011033); by Modeling Nature Research Unit, Grant QUAL21-011 funded by Consejería de Universidad, Investigaci\'on e Innovaci\'on (Junta de Andalucía)
The authors MB and MG also thank the support of the University of Parma through the action Bando di Ateneo 2022 per la ricerca, co-funded by MUR-Italian Ministry of Universities and Research - D.M. 737/2021 - PNR - PNRR - Next Generation EU (project {\it ''Collective and self-organised dynamics: kinetic and network approaches''}), and of the PRIN 2020 project {\it ''Integrated Mathematical Approaches to Socio–Epidemiological Dynamics''} (Prin 2020JLWP23, CUP: E15F21005420006).

%
% ---- Bibliography ----
%
                                    
\end{document}